\documentclass[preprint, 12pt, proof, authoryear]{elsarticle}
\usepackage[left=1in, right=1in, top = 1.5in, bottom = 1.5in]{geometry}
\usepackage[round]{natbib}
\usepackage[utf8]{inputenc}
\usepackage[all]{xy}
\usepackage{amsthm,amssymb,centernot, amsfonts}
\usepackage{amsmath}
\usepackage{csquotes}
\usepackage{subfig}
\usepackage{enumerate}
\usepackage{graphicx}
\usepackage{enumerate}
\usepackage{color}
\usepackage{mwe}
\usepackage{appendix}
\usepackage{centernot}
\usepackage{url,hyperref,xspace}
\usepackage[ruled,vlined]{algorithm2e}
\usepackage{soul}
\usepackage{xcolor}
\usepackage{mathtools}
\usepackage{footnote}
\usepackage{tablefootnote}
\usepackage{caption}
\usepackage{adjustbox}
\usepackage{booktabs}
\usepackage[normalem]{ulem}
\usepackage{cancel}

\makesavenoteenv{tabular}
\usepackage[flushleft]{threeparttable}
\def\@centernot#1#2{%
  \mathrel{%
    \rlap{%
      \settowidth\dimen@{$\m@th#1{#2}$}%
      \kern.5\dimen@
      \settowidth\dimen@{$\m@th#1=$}%
      \kern-.5\dimen@
      $\m@th#1\not$%
    }%
    {#2}%
  }%
}
\makeatother

\newcommand{\independent}{\perp\mkern-9.5mu\perp}

\usepackage{tikz}
\usetikzlibrary{arrows.meta,shapes}
\usetikzlibrary{arrows,shapes.arrows,shapes.geometric,shapes.multipart,
decorations.pathmorphing,positioning,shapes.swigs,}

\usepackage{setspace}
\doublespacing

\newtheorem{proposition}{Proposition}

\newtheorem{definition}{Definition}

\MakeOuterQuote{"}\EnableQuotes
\DeclareUnicodeCharacter{00A0}{ }

\usepackage[final]{changes}

\makeatletter
\def\ps@pprintTitle{%
 \let\@oddhead\@empty
 \let\@evenhead\@empty
 \def\@oddfoot{}%
 \let\@evenfoot\@oddfoot}
\makeatother

\begin{document}

\begin{frontmatter}
\title{Perspectives on \replaced{`harm'}{harm} in personalized medicine}

\author{Aaron L. Sarvet  \corref{cor1}}
\author{Mats J. Stensrud}

\address{Department of Mathematics, École Polytechnique Fédérale de Lausanne, Switzerland}
\cortext[cor1]{\textbf{Contact information for corresponding author:}\\
Aaron L. Sarvet, Department of Mathematics, École Polytechnique Fédérale de Lausanne, Switzerland. \url{aaron.sarvet@epfl.ch}}

\begin{abstract}
Avoiding harm is an uncontroversial aim of personalized medicine and other epidemiologic initiatives. However, the precise mathematical translation of ``harm'' is disputable. Here we use a formal causal language to study common, but distinct, definitions of ``harm''. We clarify that commitment to a definition of harm has important practical and philosophical implications for decision making. We relate our practical and philosophical considerations to ideas from medical ethics and legal practice.   
\end{abstract}
\end{frontmatter}

\newpage

\section{Introduction} \label{sec:intro}

Personalized medicine aims to leverage observed data to personalize treatment assignment in the future. When treatments are personalized, their assignment depends on patient characteristics.  Individualizing treatments is justifiable if it results in better (expected) utility in the population. In some cases, personalized treatments are (biologically) engineered to be compatible with known patient features, for example with immunologically-targeted cancer therapies. In other cases, several treatments already exist, and the optimal treatment for each patient can be learned from data. Finding the optimal treatment requires causal inference: we need to identify and subsequently estimate \added{the} patient's expected utilities under each of the treatment options that are available \citep{robins1986new,richardson2013single,murphy2003optimal, robins2004optimal,tsiatis2019dynamic,kosorok2021introduction}. 

Specifying utility \textit{functions} -- the mapping of a patient's outcomes and other features to some number -- is a fundamental step in posing this task, because it forms the basis for determining preferences among treatment options. However, this step is often avoided by epidemiologists and statisticians, perhaps because of its philosophical and subjective reputation. Yet, decision makers and methodologists will usually agree,  at least superficially, with the classical Hippocratic maxim: ``\textit{First, do no harm}''. 

Consideration of ``harm'' is the focus of this article. We anchor results in an emerging literature offering an  operationalization of ``harm'' that promises to possibly `provide the key for next-generation personalized decision making.' \citep{mueller2022personalized}.  Their methods propose strategies for supplementing experimental data with confounded non-experimental data that are often disregarded in classical approaches \citep{tian/pearl:probcaus}. We demonstrate that their proposals take for granted a particular operationalization of ``harm'', and we contrast it with notions of harm defined by parameters identified by experimental data. To bring clarity, we articulate two alternative approaches, which we refer to as \textit{counterfactual} and \textit{interventionist}, respectively. These two approaches are distinguished by utility functions of fundamentally different natures. Furthermore, they each correspond to different notions of ``harm'' and to different interpretations of the Hippocratic principle.

\section{Preliminaries} \label{sec: prelim}  
Suppose we observe a random sample of patients from a large population. We let $Y$ be the outcome, $A$ be a binary treatment, \added{and }$L$ be baseline covariates\deleted{, and $R$, a binary indicator of participation within a controlled experiment on $A$ within levels of $L$}.  

Throughout, we will use potential outcomes \textit{notation} to denote the outcome that would occur under an intervention that fixes treatment to a specific value \citep{dbr:jep} (see Web Appendix 1 for additional remarks). For example, $Y^{a=1}$ denotes a patient's potential outcome that would occur had $A$ been fixed, possibly counter to fact, to $a=1$.

We \replaced{consider $\{L, A, Y, Y^{a=1}, Y^{a=0}\}$ to be the full data, of which only $\{L, A, Y\}$ are observed.}{let $O_F \equiv \{R, L, A, Y, Y^{a=1}, Y^{a=0}\}$ denote the full data, and $O \equiv \{R, L, A, Y\}$denote the observed data.}  \deleted{We let $P$ denote an arbitrary joint distribution for the full data, and we let $P\equiv P(P)$ denote the subdistribution of $P$ for the observed data. Furthermore, we let $P$ and $P$ denote the true such distributions at the time of the study, and further let $P$ and $P$ be the analogous such distributions at the time of a future decision. We distinguish between $P$ and $P$ to formalize assumptions used to relate observed data inferences to future decisions.}When discussing the \textit{counterfactual} approach we use $S$ to denote the joint principal \replaced{stratum}{strata} variable $\{Y^{a=1},  Y^{a=0}\}$ \citep{robins1986new, frangakis2002principal}, where the events $(S=1, S=2, S=3, S=4)$ correspond to the events $(\{Y^{a=1}=1,  Y^{a=0}=0\}, \{Y^{a=1}=0,  Y^{a=0}=1\}, \{Y^{a=1}=1,  Y^{a=0}=1\}, \{Y^{a=1}=0,  Y^{a=0}=0\})$. 

\deleted{We define a model $\mathcal{M}$ based on assumptions listed in Table \ref{tab: M1}. We consider $\mathcal{M}$ to be minimal because it comprises only the basic model motifs that are used to: 1) link observed and causal parameters ($\textbf{A1}$); 2) distinguish experimental and non-experimental data ($\textbf{A2}$); 3) justify the fusion of experimental and non-experimental data ($\textbf{A3}$); and 4) justify the use of observed data for future decision settings ($\textbf{A4}$). [TABLE 1 REMOVED]}

\subsection{Defining utility functions}

Utilities are crucial to theories about decisions \citep{von2007theory}. The simplest utility function is one that is equal to a patient's outcome under a treatment $a$, $Y^a$. A more elaborate utility function might depend further on the treatment actually received. We will consider one such utility $U^{a}_1$ under $a$,

$$U_1^a = \mu(Y^a, a).$$  

This utility is defined solely by variables that are directly observable under treatment $a$. We will refer to these utilities as \textit{interventionist utilities}.

In principle, however, utilities may be functions of any subset of well-defined patient variables, for example, the potential outcome corresponding to a treatment that was \textit{not} assigned to that patient. In contrast to an \textit{interventionist} utility, this utility is defined in part by \textit{counterfactual} variables that would not be directly observable under treatment $a$. We will thus refer to these utilities as \textit{counterfactual} utilities. A special case is a $U_2^a$ that depends simultaneously on the interventionist utilities that would arise under either treatment level:

\begin{align}
    U_2^a = & \gamma^*(\mu(Y^a, a), \mu(Y^{1-a}, 1-a), a) \label{eq: gammastar}\\
          =  & \gamma(S, a). \label{eq: gamma}
\end{align}

$U_2^a$ may alternatively be understood as a function of a patient's principal strata $S=\{Y^{a=1}, Y^{a=0}\}$ and their assigned treatment $a$. A patient or investigator specifying a counterfactual utility may either specify $\mu$ and $\gamma^*$, as in \eqref{eq: gammastar}, or alternatively may specify $\gamma$ directly, as in \eqref{eq: gamma}. 

We refer to the \textit{counterfactual} and  \textit{interventionist} approaches as those that use \textit{counterfactual} and  \textit{interventionist} utility functions, respectively.






\section{Twin definitions of harm in personalized medicine} \label{sec: harm} 


Recital of the Hippocratic Oath ``first, do no harm'' is a common ritual for clinical trainees \citep{orr1997use}. However, there is broad consensus that the Oath is not a sufficiently rich formulation of a modern medical ethics, nor does its meaning unambiguously speak for itself \citep{beauchamp1994principles}. Nevertheless, this has not prevented \added{some} contemporary authors in causal inference from claiming a singular mathematical translation: for example \citet{ben2022policy} identify the ``first, do no harm'' principle with a particular specification of a \textit{counterfactual} utility function, $\gamma$, such that 



\begin{align}
    \gamma(s=1, a=0) - \gamma(s=1, a=1) > \gamma(s=2, a=1) - \gamma(s=2, a=0). \label{eq: benharm}
\end{align}

When $Y$ indicates patient death, then $S=1$ indicates a patient that would only die under treatment (the ``harmed''), and $S=2$ indicates a patient who would only survive under treatment (the ``saved'').  Thus, \citet{ben2022policy} consider a $\gamma$ such that the utility \textit{gain} upon \textit{withholding} treatment from a patient who would be ``harmed'', $\gamma(s=1, a=0) - \gamma(s=1, a=1)$, is greater than that \textit{gain} upon \textit{giving} treatment to a patient who would be ``saved'', $\gamma(s=2, a=1) - \gamma(s=2, a=0)$. 

These \replaced{appear to be}{may seem like} symmetric actions and thus should be assigned equal \textit{gain}: both decisions presumably save a life, by selecting the only decision resulting in a favorable outcome. However, according to \replaced{the translation of harm adopted by}{this translation} \citep{ben2022policy}, treatment $a=1$ is privileged with the capacity to ``do''. Thus, the first \textit{gain} apparently avoids an \textit{error of comission}, whereas the second \textit{gain} avoids an \textit{\textit{error of omission}}. The asymmetry apparently derives from the refrain to ``\underline{\textit{first}}, do no harm'', as in ``\underline{\textit{first}}'' make no \textit{errors of comission}. \citet[pg. ~8]{ben2022policy} assert that this particular type of asymmetry is ``encoded by the principle to 'do no harm' in the Hippocratic oath''. Their translation motivates the following definition of \textit{counterfactual harm}:

\begin{definition}[Counterfactual harm] \label{def: CFharm}
Suppose a patient is assigned treatment $a=1$ (relative to no treatment, $a=0$) and that their interventionist utility under treatment is less than their utility under no treatment, $\mu(Y^{a=1}, a=1) < \mu(Y^{a=0}, a=0)$. This is \textit{counterfactual harm}. 
\end{definition}

In other words, \textit{counterfactual harm} occurs when a patient takes treatment $a=1$ but their \replaced{personal}{own} \textit{interventionist utility} is not maximized by this treatment value.  We call this \textit{counterfactual} because its definition depends on at least one potential outcome for a patient that is in principle impossible to observe. We associate this notion of \textit{counterfactual harm} with the proposals of \citet{ben2022policy} and \citet{li/pearl:ijcai19, mueller2022personalized}, because their proposals also require consideration of a patient's principal stratum.




While Definition \ref{def: CFharm} is an ostensibly natural interpretation of ``harm'' in the Hippocratic sense, examination of modern medical ethics suggests that an alternative formulation may operate in practice. 
\citet{beauchamp1994principles} noted that the principle of \textit{Nonmaleficence} is treated identically to the principle of ``first, do no harm''. However, they distinguish between cases where negative outcomes are presumably caused almost surely, and cases where merely the \textit{risk} of harm is caused by an agent. For example, \citet[pg. ~154]{beauchamp1994principles} state that ``Obligations of nonmaleficence include not only obligations not to inflict harms, but also obligations not to impose risks of harm.'' In such cases where certainty is forgone, \textit{maleficence} is identified with a deviation from a ``standard of due care''. But in modern evidence-based medicine, several standards of care are often available, i.e., whenever a second treatment $a=1$ is accepted as non-inferior to a first treatment $a=0$. These determinations are often based on clinical trials, in which $\mathbb{E}[Y^{a=1}]$ and $\mathbb{E}[Y^{a=0}]$ are identified. 


Consider then the following example: suppose $U^a_1=(1-Y^{a})$ defined a particular \textit{interventionist} utility and thus \replaced{$\{Y^{a=1} > Y^{a=0}\} \equiv \{S=1\}$}{$I(Y^{a=1} > Y^{a=0}) \equiv I(S=1)$} constitutes \textit{counterfactual harm}. It is possible that both $P(S=1)>0$ \textit{and} $\mathbb{E}[Y^{a=1}]=\mathbb{E}[Y^{a=0}]$; some patients are \textit{counterfactually harmed} by treatment in the sense of Definition \ref{def: CFharm}, yet a physician could still be considered to have followed ``due standard of care'' when assigning $a=1$. This has an immediate, important implication: if upholding a ``due standard of care'' is equivalent to upholding the principle of \textit{nonmaleficence}, and if \textit{nonmaleficence} is equivalent to the principle of ``first, do no harm'', then we cannot simply equate ``harm'' in the Hippocratic sense with \textit{counterfactual harm} in Definition \ref{def: CFharm}. This argument motivates an alternative definition more concordant with \textit{nonmaleficence} in modern bioethics when outcomes are uncertain, which is nearly always the case. To state this definition of harm, we first define the patient features that a provider must consider in order to uphold ``due standard of care''.

\begin{definition}[Relevant features] \label{def: relfeat}
Let $W$ denote the features taking well-defined values prior to treatment. Let $X$ denote the subset of $W$ that is \textit{measured} prior to treatment and suppose a patient has a realization $X=x$. Let $X'$ be the largest subset of $X$ such that, for $x'\subset x$,  $$\text{sign}\Big\{\mathbb{E}[Y^{a=1} \mid X'=x'] - \mathbb{E}[Y^{a=0} \mid X'=x']\Big\}$$  is taken to be fixed and known by the investigator for the purpose of the decision task. Then $X'$ are the \textit{relevant features} for that patient.  
\end{definition}

Definition \ref{def: relfeat} identifies a patient's features that can be practically used to make a personalized comparison of expected outcomes under either treatment. For practical purposes, the sign of the ATE for patients with relevant features $x'$ might be taken as fixed and known upon consideration of subject matter knowledge, the results of sufficiently large clinical trials, or other empirical evidence (e.g., consistent estimates of conditional causal effects from the data on hand.) Now we can \replaced{give}{provide} a second definition of ``harm'' that we refer to as \textit{interventionist harm}.

\begin{definition}[Interventionist harm] \label{def: Intharm}
Suppose  $L$ represents the \textit{relevant features} for a patient. Suppose a patient is assigned treatment $a=1$ (relative to no treatment, $a=0$) and that the expected potential utility under treatment, given $L=l$, is less than that expected utility under no treatment, $$\mathbb{E}[\mu(Y^{a=1}, a=1) \mid L=l] < \mathbb{E}[\mu(Y^{a=0}, a=0) \mid L=l].$$ This is \textit{interventionist harm}. 
\end{definition}

Thus, \textit{interventionist harm} occurs when a patient takes treatment $a=1$ but their expected \textit{interventionist utility} conditional on their \textit{relevant features} is not maximized.  This is an \textit{interventionist} notion because the parameters needed to determine its realization are observable in experiments. Furthermore, it motivates utilities based on factual responses  observable post-hoc for a given patient.

\section{Decision theory}  \label{sec: decision} 

It is classically assumed that rational agents in a future decision setting will select the treatment $a$ maximizing their expected utility \citep{von2007theory}, $\mathbb{E}[U^a]$\deleted{, which is a parameter of $P$}. A patient would leverage both their knowledge of their own features\deleted{, i.e.\ realizations of $O$,} at the decision point, and knowledge of \replaced{the distribution of the full data}{$P$}; that is, they will base their decisions on the parameters of $P$ corresponding to their \textit{relevant} features. For example, when a patient observes the value of their baseline relevant features $L=l$, then they would choose the treatment $a$ equal to $$\text{argmax}_{a}\mathbb{E}[U^a \mid L=l].$$

Suppose a patient with relevant features $X'=x'$ subscribes to a \textit{counterfactual} notion of harm (Definition \ref{def: CFharm}). Thus, they specify a counterfactual utility function $\gamma$ via specification of an interventionist utility function $\mu$ and its modifying function $\gamma^*$. Let $U^a_2$ define this utility. If this patient is a rational agent, they will choose the treatment $a$ maximizing $$\mathbb{E}[U_2^a \mid X'=x'] = \sum_{s} \gamma(s, a)P(S=s \mid X'=x').$$

Alternatively, suppose this patient rather subscribes to an \textit{interventionist} notion of harm (Definition \ref{def: Intharm}). Thus, they merely specify an \textit{interventionist} utility function $\mu$. Let $U^a_1$ denote this \textit{interventionist} utility under $\mu$. If this patient is a rational agent, they will choose the treatment $a$ maximizing $$\mathbb{E}[U_1^a \mid X'=x'] = \sum_{y} \mu(y, a)P(Y^{a}=y \mid X'=x').$$

\section{Practical considerations}\label{sec: pract} 

\subsection{Utility maximization with experimental data}

Consider the practical task of selecting a treatment using parameters of the observed data\deleted{$P$, under model $\mathcal{M}$ defined in Table \ref{tab: M1}}. For a patient maximizing \textit{interventionist} utility, this is reducible to identifying $\mathbb{E}[Y^a \mid L=l]$ for each $a$ in terms of parameters of those observed data\deleted{ at the time of data collection, $P$}. \replaced{When data arise from an ideal experiment, we}{We} identify $\mathbb{E}[Y^a \mid L=l]$ by \deleted{from parameters of $P$ }

\begin{align}
    \mathbb{E}[Y^a \mid L=l] = & \mathbb{E}[Y \mid L=l, A=a]. 
\end{align}

This identity follows from assumptions \replaced{enforced by design of the experiment}{of model $\mathcal{M}$} (see Web Appendix 2 \added{for explicit arguments}).  With these parameters and a specified \textit{interventionist} utility function $\mu$, then $L$ would constitute each patient's relevant covariate. A patient with $L=l$ can readily choose the treatment $a$ maximizing expected utility via $$\text{argmax}_a\mathbb{E}[Y \mid L=l, A=a].$$ 

For a patient maximizing \textit{counterfactual} utility, this task is reducible to identifying $P(S=s \mid L=l)$ for each $s$\deleted{ in terms of $P$}, but is more challenging \added{than maximizing \textit{interventionist} utility}; identification is generally not possible under \replaced{only those assumptions enforced by}{model $\mathcal{M}$, which we note describes data from} an ideal randomized trial.

Alternatively, the values of $P(S=s \mid L=l)$ may be \textit{partially} determined under \replaced{such assumptions}{$\mathcal{M}$} by leveraging constraints imposed by identified quantities. Specifically these constraints can be used to construct logical upper and lower limits, i.e., \textit{bounds}, on the principal stratum probabilities, which fully leverage the\added{se} assumptions\deleted{ in $\mathcal{M}$}. When upper and lower limits coincide, probabilities are identified. Otherwise, the true value of the probabilities within the bounds will depend ultimately on a parameter whose value is completely uninformed by \replaced{the observed data in this setting}{$P$ under the model}.  

We review two cases when bounds are sufficient for identifying the treatment $a$ maximizing $\mathbb{E}[U_2^a \mid L=l]$. A first case is when the upper and lower limits coincide, so that the joint distribution of $\{Y^{a=1},Y^{a=0}\}$ is fully determined by the experimental data, which occurs \deleted{under $\mathcal{M}$} only in the case that, \deleted{for some $a,y =0,1$, $P(Y^a=y \mid L=l) =0$.  That is to say, }for at least one treatment \added{level $a$}, the resulting outcome is predicted with certainty.  \deleted{In this case, both joint events $(Y^a = y, Y^{1-a}=0)$ and $(Y^a = y, Y^{1-a}=1)$ have probability $0$ among those with $L=l$.} Thus, the conditional distribution of $S$ is point identified, and $\mathbb{E}[U_2^a \mid L=l]$ is evaluated directly. This is a condition that is verifiable with \deleted{knowledge of $P$ whenever $P$ describes}experimental data, i.e. when $P(Y^a=y \mid L=l)$ is identified\deleted{ by $P$, which is the case under $\mathcal{M}$}. 

Second, certain specifications of $\gamma$ permit maximization even when only bounds for $P(S=s \mid L=l)$ are available. If we take $\delta(s) = \gamma(s, a=1) - \gamma(s, a=0)$ to be the counterfactual utility \textit{gain} for choosing treatment $a=1$ among patients with principal stratum $s$, we can re-express $\mathbb{E}[U_2^{a=1} \mid L=l] - \mathbb{E}[U_2^{a=0} \mid L=l]$ as 

\begin{align}
    \sum_s \delta(s)P(S=s \mid L=l). \label{eq: diff}
\end{align}

Particularly, when $\gamma$ is such that  

\begin{align}
    \delta(1) + \delta(2) = \delta(3) + \delta(4), \label{eq: gaineq}
\end{align}

then $\mathbb{E}[U_2^{a=1} \mid L=l] - \mathbb{E}[U_2^{a=0} \mid L=l]$ can be re-expressed as

\begin{align} 
      & \Bigg\{\Big(\delta(1)-\delta(4)\Big)P(Y^{a=1}=1 \mid L=l)\Bigg\} +  \Bigg\{\Big(\delta(3)-\delta(1)\Big)P(Y^{a=0}=1 \mid L=l)\Bigg\}  + \delta(4) \label{eq: gaineq2}. 
\end{align}

Expression \eqref{eq: gaineq2} is only a function of the investigator-specified $\delta$, and terms like $P(Y^{a}=y \mid L=l)$, which are identified \replaced{in an experiment}{under $\mathcal{M}$}. The constraint on $\gamma$ in expression \eqref{eq: gaineq} has been called ``gain equality'' by some authors \citep{li/pearl:ijcai19, li2022unit3}. \citet{ben2022policy} consider a special class of $\gamma$,
called ``symmetric'' utility functions, which also satisfy ``gain equality''. Such functions weigh equally errors of ``comission'' and errors of  ``omission'' and thus would not satisfy their  interpretation of the ``first, do no harm principle``. 


In general, however, patients who would like to choose the treatment $a$ maximizing their \textit{counterfactual} utility\replaced{, using only assumptions that could be experimentally enforced,}{ under model $\mathcal{M}$} will have to proceed with only the knowledge that  $P(S=s \mid L=l)$ must be contained within an interval. Building on results in \citet{cui2021individualized},  \citet{ben2022policy} illustrate that, while the treatment value $a^*$ that maximizes $\mathbb{E}[U_2^a \mid L=l]$ is not in general identified, one can identify the value of treatment $a'$ that minimizes the \textit{worst case regret}, relative to $a^*$ \citep{savage1951theory}, and suggest that such a selection procedure be used by a patient. \citet{li/pearl:ijcai19} heuristically suggest an identical procedure, based on the results of two simulated examples. Although regret minimization allows decisions to be made under uncertainty, we note that its use represents a departure from classical decision theory.  

\subsection{On the fusion of experimental \& non-experimental data}

For illustrative purposes, we will suppose that no baseline covariates have been measured, i.e.\ that $L\equiv \emptyset$, but the arguments can straightforwardly be extended to settings with measured baseline covariates.

\subsubsection{Counterfactual approach}

\citet{tian/pearl:probcaus} illustrate how bounds for $P(S=s)$ identified by experimental data can be narrowed by supplementing with confounded non-experimental data. To provide some intuition for this result, consider an arbitrary treatment regime $g$. The average treatment effect (ATE) comparing ``always treat'', $a=1$, and $g$, that is $$\mathbb{E}[Y^{a=1}] - \mathbb{E}[Y^{g}]$$ is a valid lower bound for $P(S=1)$ (See Proposition \ref{prop:s3} in Web Appendix 2). As an example consider the special case where $g$ is ``never treat'', $a=0$. \citet{greenland1986identifiability} classically illustrated that, for a binary $Y \in \{0,1\}$, $$\mathbb{E}[Y^{a=1}] - \mathbb{E}[Y^{a=0}] = P(S=1) - P(S=2).$$ Thus, $P(S=1)$ can be no less than this ATE, and it is only greater than this ATE by an amount equal to $P(S=2)$. \replaced{Among}{Whenever exchangeability holds, it is this special case, among} all the average treatment effects identified by the experimental data, \added{it is this special case} which \textit{maximizes} the lower bound (See Proposition \ref{prop:s4} in Web Appendix 2). For other regimes $g$, which possibly depend on unmeasured features, an improved lower bound could be obtained if not for the fact that no such regime is identified by \replaced{such}{the experimental} data.  Thus, it is this canonical ATE that is typically used to compute sharp nonparametric bounds. However, when \textit{confounded} \textit{non}-experimental data \replaced{are}{is} additionally used, the expected outcome under one such regime \textit{is} \added{indeed} identified: the \textit{factual} (but unknown) regime that was used to generate that non-experimental data, which is $\mathbb{E}[Y]$. This parameter then may be used as one additional chance to improve the lower bound on $P(S=1)$, as illustrated by \citet{tian/pearl:probcaus}. We provide an articulation of these improved bounds in terms of our notation and \replaced{conventional}{the} assumptions \deleted{of model $\mathcal{M}$ }in Web Appendix 3.

While this result has been known for decades, it is currently emphasized by \citet{li2022probabilities1} and also \citet{mueller2022personalized} as an important practical tool for implementing a \textit{counterfactual} approach.

\subsubsection{Interventionist approach}

The allocation of treatment to patients in practice can be decomposed into two steps: 1) treatment \textit{intention}; and 2) treatment \textit{assignment}. In this first step, the patient, or provider, decides on the desired treatment.  This decision will typically be related to health status and other information that could predict the outcome, so that those who do and do not intend to take treatment are not exchangeable groups.  This is confounding. We introduce a binary ``intention to treat'' (ITT) variable $A^*$ to denote the treatment desired. The \textit{assigned} treatment $A$ is the treatment actually received. In non-experimental (confounded) data, the \textit{intended} and \textit{assigned} treatment may always coincide, because patients and clinicians are free to exercise their intentions. In experimental data, one would in general not assume that $A^*$ will coincide with $A$ because in an experiment, a patient's treatment intentions are subverted by the random treatment assignment mechanism. In both cases the observed data will not in general include measurements of $A^*$. However, we might leverage $A^*$ through \replaced{auxiliary}{the additional} assumptions \added{that would justify the fusion of experimental and non-experimental data, which might hold if experimental and non-experimental patients were recruited randomly from the same source population.}  \deleted{where we let $P^{F*}$ denote the law of the full data including the intention to treat variable.} We define \replaced{such assumptions in Web Appendix 2}{in Table \ref{tab: M2} an elaborated model $\mathcal{M}^*$ encoding these assumptions [TABLE 2 REMOVED]}.

Importantly, when $A^*$ is measured in a future decision setting, and knowledge of the sign of the average treatment effect conditional on $A^*$ is known or identified, then $A^*$ will be a \textit{relevant feature} as in Definition \ref{def: relfeat}. \added{Indeed, these effects are identified when properly fused experimental and non-experimental data are available (see Web Appendix 2). Intuitively, the challenging quantity to identify is the expected outcome under assigned treatment level $a$ among patients whose own treatment intention $a'$ -- unmeasured in the experiment -- disagreed with their assignment, $a\neq a'$.  But we also know that the marginal expected outcome under assigned treatment level $a$ -- identified in the experiment -- is simply an average of this conditional quantity and the corresponding expected outcome among those whose own treatment intention actually agreed with their assignment, $a=a'$, weighted by the probabilities of treatment intentions. Since this second set of quantities are observed directly in the non-experimental data, then the challenging quantity is the sole unknown, and its value can be simply deduced.}

As illustrated in \citet{stensrud2022optimal}, a patient will do as well or better, in terms of expected \textit{interventionist} utility, when the ITT variable $A^*$ is additionally used as a relevant covariate. Furthermore, Proposition \ref{prop:s5} in Web Appendix 2 shows that non-experimental data will improve an \textit{interventionist}'s expected utilities if and only if these non-experimental data improve the lower bound on the probability of ``harm'', that is $P(S=1 \mid L=l)$. Thus, the \textit{counterfactual} and \textit{interventionist} uses of confounded non-experimental data are connected.


\section{Philosophical considerations} \label{sec: philosoph} 

We ground philosophical considerations by assuming a simple interventionist utility $U^a_1=\mu(Y^a, a)=1-Y^a$ and suppose that $Y=1$ indicates patient death. We also consider a \textit{counterfactual} utility $U^a_2$ induced by $\gamma^*$ and $\mu$. \added{We emphasize that these considerations are not exhaustive, but are prompted by claims from contemporary advocates of a counterfactual approach. In contrast to these advocates, we do not here promote either approach.}

\subsection{Excess death under a counterfactual utility}
\citet{ben2022policy} show that the task of selecting a treatment value $a$ under a \textit{counterfactual} approach may be reformulated as the following constrained optimization task:

\begin{align*}
    \text{argmin}_a & \mathbb{E}[Y^a] \text{ subject to the constraints that}\\
                    & P(S=1) < \chi_1 \text{ and}\\
                    & P(S=2) < \chi_2,
\end{align*}

where $\chi_1$ and $\chi_2$ are some numbers in $[0,1]$ implied by $\gamma$ under a particular \replaced{distribution of the full data}{$P$}. In contrast, the \textit{interventionist's} task will simply have the optimization task $\text{argmin}_a \mathbb{E}[Y^a]$, subject to \textit{no} constraints. 

This formulation has the following simple implication: when policy-makers optimize \textit{counterfactual} utilities, then, in general, more people will die. 

Proponents of a \textit{counterfactual} approach may argue: but all deaths are not considered equal; they may argue that the true utility, given by $\gamma$, is one that uses the possibly asymmetric \textit{counterfactual} utility function on the principal strata that is appropriately formulated to reflect notions of \textit{counterfactual harm}. However, a serious problem is that we have no direct evidence that these principal strata exist. Even if one makes a metaphysical commitment to their existence, a patient will never know their true principal stratum, except under extreme circumstances; thus, no patient will ever know their post-hoc utility, and no policy may ever be evaluated, or compared to an alternative policy, using direct observations. 

In other words, the \textit{counterfactual} \replaced{approach}{framework} requires a belief in metaphysical objects whose existence can neither be confirmed nor denied. While patients may be free to hold such beliefs, policy-makers should know the implications: when a \textit{counterfactual} framework is deployed to determine social policies and regulations, it coerces conformity to an unverifiable metaphysics and a corresponding logic that deals in those terms. In contrast, when an \textit{interventionist} framework is deployed in such a setting, no such coercion is made, and \deleted{importantly,} patient and group outcomes are observable and thus transparent. 

\subsection{Medical liability in United States Law}
Proponents of a \textit{counterfactual} approach, see e.g. \citet{tian/pearl:probcaus},  \replaced{have justified}{will often justify} its use by appealing to discourse in modern law and medicine. Indeed, the modern medico-legal system in the United States appears to parlay in terms of counterfactual logic when determining liability for patient outcomes. Legal case precedent has established a ``but for'' doctrine in medical malpractice determination. One interpretation asserts that a provider cannot be considered liable for a negative patient outcome except when it could have been determined at the time of treatment that the negative outcome would not occur ``but for'' the provider's actions. However, although such counterfactual terms are apparently invoked, the exact same legal phenomena might be understood through an \textit{interventionist} lens. If liability requires a determination that $P(S=1 \mid L=l)=1$, merely a high probability of ``but-for causation'' is not sufficient, see \citet{Kline} for a review of relevant case law. But, $P(S=1 \mid L=l)=1$ occurs precisely when $P(Y^{a=0}=0 \mid L=l)=1$ and  $P(Y^{a=1}=1 \mid L=l)=1$. Thus, despite the use of counterfactual language, the actual phenomena produced by the current medico-legal framework could equivalently be explained by an \textit{interventionist} logic: a provider is held liable only when $P(Y^{a=0}=0 \mid L=l)=1$ and  $P(Y^{a=1}=1 \mid L=l)=1$. %

\subsection{Issues with partial identification}

An investigator adopting minimal assumptions will in general only  partially identify expected \textit{counterfactual} utilities (see Section \ref{sec: pract}). This is not resolved through supplementation with non-experimental data, since these data will in general only result in narrower bounds. Thus, the treatment that truly maximizes \textit{counterfactual} utility will still be ambiguous. \citet{ben2022policy} proposes to resolve this ambiguity through a regret-minimization approach \citep{savage1951theory}. However, regret minimization is only one of many well-understood approaches for resolving such ambiguity. Alternatives include choosing the rule that maximizes the minimum possible utility, or placing a distribution over the remaining possible values of an unidentified nuisance parameter (see \citet{stoye2012new} and \citet{cui2021individualized} for reviews). These different approaches will in general converge to different decisions at increasing sample sizes whenever \textit{counterfactual} utilities are not point identified, and so an investigator's choice among them is one additional subjective researcher degree of freedom that does not burden an \textit{interventionist}.

\section{Conclusion} \label{sec: conc}
We have defined and contrasted \textit{counterfactual} and \textit{interventionist} approaches to decision making. Contrary to claims of some authors \citep{li/pearl:ijcai19, mueller2022personalized}, a \textit{counterfactual} approach should not \added{necessarily} portend a revolution in personalized medicine. Specifying utility functions and their domains, as well as a framework for decision-making, remains highly subjective. In tension with proponents of a \textit{counterfactual} approach, we have reviewed several practical and philosophical considerations that seem to problematize its use and challenge some of its core premises, e.g.\ that it somehow naturally corresponds with prevailing medical ethics and legal practice. Perhaps most problematic from a population policy-maker's perspective: when the outcome is death and a \textit{counterfactual} approach is used, in general, more people will die under the identified optimal policy compared to that identified by an \textit{interventionist} approach. A strong critique of the \textit{\textit{counterfactual}} approach calls it ``dangerously misguided'' and warns that ``real harm will ensue if it is applied in practice'' \citep{dawidharm}. We take the following stance:  as causal inference become increasingly embedded in the development of personalized medicine, it is important that researchers, practitioners and other decision makers clearly understand the different approaches to decision making and their practical and philosophical consequences.

\section*{Acknowledgements}
We thank Philip Dawid and Stephen Senn for many insightful discussions and comments. \added{This work has been supported by the Swiss National Science Foundation (Grant Number $200021_207436$: Causal Inference when resources are limited).}

\bibliographystyle{plainnat}
\bibliography{refs}

\begin{thebibliography}{28}
\providecommand{\natexlab}[1]{#1}
\providecommand{\url}[1]{\texttt{#1}}
\expandafter\ifx\csname urlstyle\endcsname\relax
  \providecommand{\doi}[1]{doi: #1}\else
  \providecommand{\doi}{doi: \begingroup \urlstyle{rm}\Url}\fi

\bibitem[Beauchamp and Childress(1994)]{beauchamp1994principles}
Tom~L Beauchamp and James~F Childress.
\newblock \emph{Principles of biomedical ethics}.
\newblock Edicoes Loyola, 1994.

\bibitem[Becker et~al.(2018)Becker, Specter, and Kline]{Kline}
Charles~L. Becker, Shanin Specter, and Thomas~R. Kline.
\newblock The supreme court and medical malpractice law.
\newblock In \emph{The Supreme Court of Pennsylvania: Life and Law in the
  Commonwealth, 1684-2017}, pages 241--258. Penn State Press, 2018.

\bibitem[Ben-Michael et~al.(2022)Ben-Michael, Imai, and Jiang]{ben2022policy}
Eli Ben-Michael, Kosuke Imai, and Zhichao Jiang.
\newblock Policy learning with asymmetric utilities.
\newblock \emph{arXiv preprint arXiv:2206.10479v1}, 2022.

\bibitem[Cui(2021)]{cui2021individualized}
Yifan Cui.
\newblock Individualized decision-making under partial identification: Three
  perspectives, two optimality results, and one paradox.
\newblock \emph{arXiv preprint arXiv:2110.10961}, 2021.

\bibitem[Dawid(2021)]{apd:found}
A.~Philip Dawid.
\newblock Decision-theoretic foundations for statistical causality.
\newblock \emph{Journal of Causal Inference}, 9:\penalty0 39--77, 2021.
\newblock \\{\small\href{http://dx.doi.org/10.1515/jci-2020-0008}{\tt
  DOI:10.1515/jci-2020-0008}}.

\bibitem[Dawid and Musio(2022)]{apd/mm:sef}
A.~Philip Dawid and Monica Musio.
\newblock What can group level data tell us about individual causality?
\newblock In A.~Carriquiry, J.~Tanur, and W.~Eddy, editors, \emph{Statistics in
  the Public Interest: In Memory of Stephen E. Fienberg}, pages 235--256.
  Springer International Publishing, 2022.
\newblock \\ \href{https://doi.org/10.1007/978-3-030-75460-0_13}{\tt DOI:
  10.1007/978-3-030-75460-0\_13}.

\bibitem[Dawid and Senn()]{dawidharm}
Philip~A Dawid and Stephen Senn.
\newblock Personalised decision-making without counterfactuals.
\newblock \emph{Journal of Causal Inference}.

\bibitem[Frangakis and Rubin(2002)]{frangakis2002principal}
Constantine~E Frangakis and Donald~B Rubin.
\newblock Principal stratification in causal inference.
\newblock \emph{Biometrics}, 58\penalty0 (1):\penalty0 21--29, 2002.

\bibitem[Geneletti and Dawid(2011)]{sgg/apd:ett}
Sara~G. Geneletti and A.~Philip Dawid.
\newblock Defining and identifying the effect of treatment on the treated.
\newblock In Phyllis~M. Illari, Federica Russo, and Jon Williamson, editors,
  \emph{Causality in the Sciences}, pages 728--749. Oxford University Press,
  2011.

\bibitem[Greenland and Robins(1986)]{greenland1986identifiability}
Sander Greenland and James~M Robins.
\newblock Identifiability, exchangeability, and epidemiological confounding.
\newblock \emph{International journal of epidemiology}, 15\penalty0
  (3):\penalty0 413--419, 1986.

\bibitem[Kosorok et~al.(2021)Kosorok, Laber, Small, and
  Zeng]{kosorok2021introduction}
Michael~R Kosorok, Eric~B Laber, Dylan~S Small, and Donglin Zeng.
\newblock Introduction to the theory and methods special issue on precision
  medicine and individualized policy discovery, 2021.

\bibitem[Li and Pearl(2019)]{li/pearl:ijcai19}
Ang Li and Judea Pearl.
\newblock Unit selection based on counterfactual logic.
\newblock In \emph{Proceedings of the Twenty-Eighth International Joint
  Conference on Artificial Intelligence, {IJCAI-19}}, pages 1793--1799.
  International Joint Conferences on Artificial Intelligence Organization, 7
  2019.
\newblock {\small\href{https://doi.org/10.24963/ijcai.2019/248}{\tt
  DOI:10.24963/ijcai.2019/248}}.

\bibitem[Li and Pearl(2022{\natexlab{a}})]{li2022probabilities1}
Ang Li and Judea Pearl.
\newblock Probabilities of causation: Role of observational data.
\newblock \emph{arXiv preprint arXiv:2210.08874}, 2022{\natexlab{a}}.

\bibitem[Li and Pearl(2022{\natexlab{b}})]{li2022unit3}
Ang Li and Judea Pearl.
\newblock Unit selection: Case study and comparison with a/b test heuristic.
\newblock \emph{arXiv preprint arXiv:2210.05030}, 2022{\natexlab{b}}.

\bibitem[Mueller and Pearl(2022)]{mueller2022personalized}
Scott Mueller and Judea Pearl.
\newblock Personalized decision making--a conceptual introduction.
\newblock \emph{arXiv preprint arXiv:2208.09558}, 2022.

\bibitem[Murphy(2003)]{murphy2003optimal}
Susan~A Murphy.
\newblock Optimal dynamic treatment regimes.
\newblock \emph{Journal of the Royal Statistical Society: Series B (Statistical
  Methodology)}, 65\penalty0 (2):\penalty0 331--355, 2003.

\bibitem[Orr et~al.(1997)Orr, Pang, Pellegrino, and Siegler]{orr1997use}
Robert~D Orr, Norman Pang, Edmund~D Pellegrino, and Mark Siegler.
\newblock Use of the hippocratic oath: a review of twentieth century practice
  and a content analysis of oaths administered in medical schools in the us and
  canada in 1993.
\newblock \emph{Journal of Clinical Ethics}, 8\penalty0 (4):\penalty0 377--388,
  1997.

\bibitem[Richardson and Robins(2013)]{richardson2013single}
Thomas~S Richardson and James~M Robins.
\newblock Single world intervention graphs (swigs): A unification of the
  counterfactual and graphical approaches to causality.
\newblock \emph{Center for the Statistics and the Social Sciences, University
  of Washington Series. Working Paper}, 128\penalty0 (30):\penalty0 2013, 2013.

\bibitem[Robins(1986)]{robins1986new}
James~M Robins.
\newblock A new approach to causal inference in mortality studies with a
  sustained exposure period—application to control of the healthy worker
  survivor effect.
\newblock \emph{Mathematical modelling}, 7\penalty0 (9-12):\penalty0
  1393--1512, 1986.

\bibitem[Robins(2004)]{robins2004optimal}
James~M Robins.
\newblock Optimal structural nested models for optimal sequential decisions.
\newblock In \emph{Proceedings of the second seattle Symposium in
  Biostatistics}, pages 189--326. Springer, 2004.

\bibitem[Robins et~al.(2007)Robins, Vanderweele, and
  Richardson]{forcinacomment}
James~M. Robins, Tyler~J. Vanderweele, and Thomas~S. Richardson.
\newblock {Comment on ``Causal effects in the presence of non compliance: A
  latent variable interpretation'' by Antonio Forcina}.
\newblock \emph{Metron}, LXIV:\penalty0 288--298, 2007.

\bibitem[Rubin(1974)]{dbr:jep}
Donald~B. Rubin.
\newblock Estimating causal effects of treatments in randomized and
  nonrandomized studies.
\newblock \emph{Journal of Educational Psychology}, 66:\penalty0 688--701,
  1974.

\bibitem[Savage(1951)]{savage1951theory}
Leonard~J Savage.
\newblock The theory of statistical decision.
\newblock \emph{Journal of the American Statistical association}, 46\penalty0
  (253):\penalty0 55--67, 1951.

\bibitem[Stensrud and Sarvet(2022)]{stensrud2022optimal}
Mats~J Stensrud and Aaron~L Sarvet.
\newblock Optimal regimes for algorithm-assisted human decision-making.
\newblock \emph{to come}, 2022.

\bibitem[Stoye(2012)]{stoye2012new}
J{\"o}rg Stoye.
\newblock New perspectives on statistical decisions under ambiguity.
\newblock \emph{Annu. Rev. Econ.}, 4\penalty0 (1):\penalty0 257--282, 2012.

\bibitem[Tian and Pearl(2000)]{tian/pearl:probcaus}
Jin Tian and Judea Pearl.
\newblock Probabilities of causation: Bounds and identification.
\newblock \emph{Annals of Mathematics and Artificial Intelligence},
  28:\penalty0 287--313, 2000.

\bibitem[Tsiatis et~al.(2019)Tsiatis, Davidian, Holloway, and
  Laber]{tsiatis2019dynamic}
Anastasios~A Tsiatis, Marie Davidian, Shannon~T Holloway, and Eric~B Laber.
\newblock \emph{Dynamic treatment regimes: Statistical methods for precision
  medicine}.
\newblock Chapman and Hall/CRC, 2019.

\bibitem[Von~Neumann and Morgenstern(2007)]{von2007theory}
John Von~Neumann and Oskar Morgenstern.
\newblock Theory of games and economic behavior.
\newblock In \emph{Theory of games and economic behavior}. Princeton university
  press, 2007.

\end{thebibliography}

\clearpage
\appendix

\section{Additional Propositions and Proofs}

\added{In this appendix we provide formal justification for results discussed in the main text, including the fusion of experimental and non-experimental data, and the use of current study data for future decision problems.}

\subsection{Preliminaries}

\added{In order to distinguish between experimental and non-experimental data, we introduce an additional variable $R$, a binary indicator of participation within a controlled experiment on $A$ within levels of $L$. We let $O_F \equiv \{R, L, A, Y, Y^{a=1}, Y^{a=0}\}$ denote the full data, and $O \equiv \{R, L, A, Y\}$ denote the observed data.}

\added{Additionally, we will distinguish between probability measures on two dimensions: 1) whether the measure describes a probability distribution for the full counterfactual data or describes a distribution for the observed data; and 2) whether the measure describes a probability distribution in the current study -- used for data analysis--  or in a future setting -- used for decision-making. While the first distinction allows differences between a directly observable distribution and a partially observed causal distribution, the latter allows consideration of possible differences between the present study setting and the future decision setting. }

\added{As such, we let $P^F$ denote an arbitrary joint distribution for the full data, and we let $P\equiv P(P^F)$ denote the subdistribution of $P^F$ for the observed data. Furthermore, we let $P^F_0$ and $P_0$ denote the true such distributions at the time of the study, and further let $P^F_1$ and $P_1$ be the analogous such distributions at the time of a future decision. Note that Definitions \ref{def: relfeat} and \ref{def: Intharm} logically involve expectations evaluated at $P^F_1$, the joint distribution for the full data at the time of a future decision. Likewise, the expectations used in utility maximization, reviewed in Section \ref{sec: decision} (Decision theory), are likewise evaluated at $P^F_1$. Intuitively, $P^F_1$ is the law of the data by which a future decision maker will find themselves governed, and so is the relevant measure for this individual.}

Let $g$ be an arbitrary regime (treatment assignment rule) for $A$ and consider a binary outcome $Y$ taking values in $\{0,1\}$. Suppose the factual data is generated by a set of non parametric structural equations and that all counterfactuals are generated via recursive substitution (for example, as in \citet{richardson2013single}).  

Let $A^{g+}$ represent the assigned value of treatment under regime $g$, where $A$ is that treatment under the regime that generates the factual data. Let $\tau_{P^F}^g$ denote the average treatment effect (ATE) comparing $a=1$ and $g$, at the law $P^F$: $$\mathbb{E}_{P^F}[Y^{a=1}] - \mathbb{E}_{P^F}[Y^g].$$ Let $\tau_{P^F}^0$ denote the special ATE where $g$ is the rule where no one is treated, $a=0$,

$$\mathbb{E}_{P^F}[Y^{a=1}] - \mathbb{E}_{P^F}[Y^{a=0}]$$

and let $\tau_{P^F}$ denote the special case where $g$ is that rule implemented in the factual data 
$$\mathbb{E}_{P^F}[Y^{a=1}] - \mathbb{E}_{P^F}[Y].$$

Likewise, let $\beta_{P^F}^g$ denote the average treatment effect (ATE) comparing $g$ and $a=0$, and let $\beta_{P^F}$ denote the special case where $g$ is the rule implemented in the factual data. 

Let $\mathcal{L}_{P^F}$ be the sharp lower bound for $P^F(S=1)$ when only the parameters $\mathbb{E}_{P^F}[Y^{a=1}]$ and $\mathbb{E}_{P^F}[Y^{a=1}]$ are known.  Let $\mathcal{L}_{P^F}^{*}$ be that sharp lower bound when the parameters $P(Y=y, A=a)$, for all $a$, $y$, are additionally known.

\subsection{Identifying conditional expected potential outcomes under model $\mathcal{M}$}

\added{We define a model $\mathcal{M}$ based on the following assumptions.}

\begin{itemize}
    \item[\textbf{A1.} ] $\mathbb{E}_{P^F_0}[Y^{a} \mid A=a, L=l, R=r] = \mathbb{E}_{P^F_0}[Y \mid A=a, L=l, R=r]$ for all $r, l, a$.
    \item[\textbf{A2.} ] $Y^a \independent_{P_0^F} A \mid L, R=1$ for all $a$.
    \item[\textbf{A3.} ] $(Y^{a=1}, Y^{a=0})  \independent_{P_0^F} R \mid  L$.
    \item[\textbf{A4.} ] $P^F_0 = P^F_1$.    
\end{itemize}
    
\added{Assumption $\textbf{A1}$ states that the expected outcome among patients with treatment $a$, covariate $l$, and trial participation indicator $r$ is equal to that expectation for similar such patients in a trial in which treatment is fixed to $a$, also known as ``distributional consistency''} \citep{apd:found}. \added{Assumption $\textbf{A2}$ states that patients are exchangeable across levels of treatment within strata of $L$ in the trial ($R=1$). Assumption $\textbf{A3}$ states that patients in the trial are exchangeable with patients outside of the trial, within strata of $L$. Assumption $\textbf{A4}$ states that the true law of the full data is stable between the time of data collection ($P^F_0$) and a future decision setting ($P^F_1$).}

\added{We consider $\mathcal{M}$ to be minimal because it comprises only the basic model motifs that are used to: 1) link observed and causal parameters ($\textbf{A1}$); 2) distinguish experimental and non-experimental data ($\textbf{A2}$); 3) justify the fusion of experimental and non-experimental data ($\textbf{A3}$); and 4) justify the use of observed data for future decision settings ($\textbf{A4}$).}
      
We may obtain $\mathbb{E}_{P^F_1}[Y^a \mid L=l]$ from parameters of $P_0$ under $\mathcal{M}$ as follows:

\begin{align}
    \mathbb{E}_{P^F_1}[Y^a \mid L=l] = & \mathbb{E}_{P^F_0}[Y^a \mid L=l]  \label{eq: proof1} \\
                                     = & \mathbb{E}_{P^F_0}[Y^a \mid L=l, R=1] \nonumber \\
                                     = & \mathbb{E}_{P^F_0}[Y^a \mid L=l, A=a, R=1]  \nonumber\\
                                     = & \mathbb{E}_{P_0}[Y \mid L=l, A=a, R=1],  \nonumber
\end{align}
assuming all terms are well defined. The first equality follows from equivalence of the full law at the time of data collection $P_0^F$ and the full law at the future decision time point $P_1^F$ (\textbf{A4.}), the second equality follows from the exchangeability of trial and non-trial participants given $L$ (\textbf{A3.}), the third follows from the exchangeability of treated and untreated participants in the trial given $L$ (\textbf{A2.}), and the fourth follows from consistency (\textbf{A1.}).

\subsection{On the Interventionist use of treatment intention as a relevant covariate}

\added{Here we define an elaborated model $\mathcal{M}^*$ whose assumptions permit the fusion of experimental and non-experimental data for improved decision making under an Interventionist approach. We let $\mathcal{M}^*$ denote the model defined by $\mathcal{M}$ supplemented with assumptions \textbf{B1.} to \textbf{B3}. We let $P^{F*}$ denote an arbitrary law of the full data including the intention to treat variable $A^*$.}

\begin{itemize}
    \item[\textbf{B1.} ] $P^*_1(A=A^* \mid R=0)=1$.
    \item[\textbf{B2.} ] $(Y^{a=1}, Y^{a=0}, A^*)  \independent_{P_0^F} R \mid  L$.
    \item[\textbf{B3.} ] $P^{F*}_0 = P^{F*}_1$.
\end{itemize}
    
\added{Assumption \textbf{B1.} asserts that a patient's intention to treat variable ($A^*$) agrees almost surely with the treatment actually received ($A$)  in the non-experimental setting ($R=0$), and thus the intention to treat variable is indirectly measured. Assumption \textbf{B2.} differs from \textbf{A3.} by asserting that trial participation is additionally independent of a patient's intention to take treatment $A^*$, conditional on covariates. Assumption \textbf{B3.} differs from \textbf{A4.} in asserting that the elaborated law of the full data (that includes $A^*$) is also stable between the time of data collection ($P^{F*}_0$) and a future decision setting ($P^{F*}_1$).}

\added{The following proposition asserts that these parameters are indeed identified under $\mathcal{M}^*$. }

\begin{proposition}

\label{prop:s2}
Under $\mathcal{M}^*$,  $P^{F*}_1(Y^{a}=y \mid A^*=a^*, L=l)$  is identified for all $(a,a^* = 0,1)$.  Specifically, 

\begin{align*}
  P^{F*}_1(Y^{a}=y \mid A^*=a^*, L=l) =   
\begin{cases}
    P(Y=y \mid A=a,  L=l, R=1) & \text{for } a=a^* \\
    \frac{P(Y=y \mid  A=a,  L=l, R=1) - P(Y=y, A=a \mid L=l, R=0)}{P(A=a^* \mid L=l, R=0)}  & \text{for } a\neq a^*. 
\end{cases}
\end{align*}

\end{proposition}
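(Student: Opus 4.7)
The plan is to prove the two cases of the proposition separately, both beginning with the stability assumption $\mathbf{B3}$ to shift from the future-decision law $P_1^{F*}$ to the study-time law $P_0^{F*}$, after which observed-data parameters become accessible.

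For the $a = a^*$ case, the strategy is a direct chain of equalities. After $\mathbf{B3}$ yields $P_1^{F*}(Y^a = y \mid A^* = a, L=l) = P_0^{F*}(Y^a = y \mid A^* = a, L=l)$, assumption $\mathbf{B2}$, which asserts $(Y^a, A^*) \independent R \mid L$, permits conditioning on the substratum $R = 0$ without altering the conditional distribution. On that substratum, $\mathbf{B1}$ equates $A^*$ with $A$ almost surely, converting the conditioning event $\{A^* = a\}$ into $\{A = a\}$. Finally, distributional consistency $\mathbf{A1}$ replaces $Y^a$ by $Y$ on the event $\{A = a\}$, producing an observed-data parameter directly.

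For the $a \neq a^*$ case, the key idea is a total-probability decomposition at $P_0^{F*}$:
\begin{equation*}
P_0^{F*}(Y^a = y \mid L = l) = \sum_{a'} P_0^{F*}(Y^a = y \mid A^* = a', L = l)\, P_0^{F*}(A^* = a' \mid L = l).
\end{equation*}
The marginal on the left is identified from the experimental arm: by $\mathbf{B2}$ we may condition on $R = 1$; $\mathbf{A2}$ then permits conditioning on $A = a$; and $\mathbf{A1}$ replaces $Y^a$ by $Y$, yielding $P(Y = y \mid A = a, L = l, R = 1)$. The intention-probabilities $P_0^{F*}(A^* = a' \mid L = l)$ are identified by conditioning on $R = 0$ via $\mathbf{B2}$ and applying $\mathbf{B1}$, yielding $P(A = a' \mid L = l, R = 0)$. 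The summand indexed by $a' = a$ is identified from the $a = a^*$ case already established. Solving the resulting linear equation for the unknown $P_0^{F*}(Y^a = y \mid A^* = a^*, L=l)$ and collapsing the product of a conditional and a marginal into a joint probability $P(Y = y, A = a \mid L = l, R = 0)$ produces the claimed formula, provided $P(A = a^* \mid L = l, R = 0) > 0$.

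The principal obstacle is bookkeeping: ensuring that each assumption is invoked at the right point, and that conditioning events are tracked consistently as random variables are exchanged (potential outcomes $Y^a$ for $Y$; intentions $A^*$ for $A$). The algebra in the $a \neq a^*$ case is routine once the identifications are in hand, but requires merging a conditional and a marginal probability into a joint probability to match the stated form.
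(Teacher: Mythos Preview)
For the $a \neq a^*$ case your approach and the paper's are essentially the same: a law-of-total-probability decomposition, with the marginal $P^{F*}(Y^a=y \mid L=l)$ identified from the experimental arm and the $A^*=a$ piece identified from the non-experimental arm via $\mathbf{B3}$, $\mathbf{B2}$, $\mathbf{B1}$, $\mathbf{A1}$. The paper works directly with the joint $P^{F*}(Y^a=y, A^*=a \mid L=l)$ in the numerator rather than factoring it into a conditional times a marginal as you do, but this is only cosmetic.

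For the $a = a^*$ case the routes diverge. The paper points to its experimental-data derivation of $\mathbb{E}[Y^a \mid L=l]$, which proceeds through $R=1$ and invokes $\mathbf{A2}$. Your chain $\mathbf{B3}\to\mathbf{B2}$ (condition on $R=0$) $\to\mathbf{B1}\to\mathbf{A1}$ instead goes through the non-experimental arm and terminates at $P_0(Y=y \mid A=a, L=l, R=0)$, \emph{not} the $R=1$ expression displayed in the proposition. Under the listed assumptions these two quantities are not in general equal: equality would require $Y^a \independent A^* \mid L$, which would make $A^*$ useless as a relevant covariate. Your derivation is internally sound and is in fact consistent with the paper's own $a\neq a^*$ computation, which establishes $P^{F*}(Y^a=y, A^*=a \mid L=l) = P_0(Y=y, A=a \mid L=l, R=0)$; dividing by $P^{F*}(A^*=a\mid L=l)=P_0(A=a\mid L=l, R=0)$ recovers exactly your $R=0$ expression. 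So the discrepancy lies with the displayed $R=1$ in the proposition rather than with your reasoning, but you should state explicitly that your chain lands at the $R=0$ formula and flag the mismatch.
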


\added{Related propositions have appeared in} \citet{stensrud2022optimal, forcinacomment,sgg/apd:ett,apd/mm:sef}.

\begin{proof}
  When $a=a^*$, arguments proceed identically as in \eqref{eq: proof1}. When $a \neq a^*$, we have by the law of total probability that $P^{F*}_1(Y^{a}=y \mid A^*=a^*, L=l)$ equals

  $$\frac{P^{F*}_1(Y^{a}=y \mid  L=l) - P^{F*}_1(Y^{a}=y, A^*=a \mid L=l)}{P^{F*}_1(A^*=a^* \mid L=l)}.$$

Then, the left hand term in the numerator is identified by the experimental data under the assumptions in $\mathcal{M}$ alone, and the remaining terms are identified by the non-experimental data using assumptions unique to $\mathcal{M}^*$. For example, we have :

\begin{align*}
P^{F*}_1(Y^{a}=y, A^*=a \mid L=l) = &  P^{F*}_0(Y^{a}=y, A^*=a \mid L=l), \\
= &  P^{F*}_0(Y^{a}=y, A^*=a \mid L=l, R=0), \\
= &  P^{F*}_0(Y^{a}=y, A=a \mid L=l, R=0), \\
= &  P_0(Y=y, A=a \mid L=l, R=0). 
\end{align*}

In the above, the first equality follows from \textbf{B3.}, the second by \textbf{B2.}, the third by \textbf{B1.}, and the fourth by laws of probability and \textbf{A3.}
\end{proof}
The above proof relies on $A$ (and so $A^*$) being binary, but $Y$ need not be.

\subsection{Propositions \ref{prop:s3} and \ref{prop:s4}}

\begin{proposition}\label{prop:s3}
    
The average treatment effect comparing $a=1$ and $g$ is a valid lower bound for the probability of harm. That is, for all laws $P^F$:

\begin{align*}
  P^{F}(S=1) \geq  \tau_{P^F}^g.  
\end{align*}

\end{proposition}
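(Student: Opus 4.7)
The plan is to prove the inequality $P^F(S=1) \geq \mathbb{E}_{P^F}[Y^{a=1}] - \mathbb{E}_{P^F}[Y^g]$ by working directly with the joint distribution of the potential outcomes and $Y^g$, exploiting the binary nature of $Y$. Since $Y$ is binary, I rewrite expectations as probabilities: $\mathbb{E}_{P^F}[Y^{a=1}] = P^F(Y^{a=1}=1)$ and $\mathbb{E}_{P^F}[Y^g] = P^F(Y^g=1)$. This reduces the target to showing
\begin{equation*}
P^F(Y^{a=1}=1, Y^{a=0}=0) \geq P^F(Y^{a=1}=1) - P^F(Y^g = 1).
\end{equation*}

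First I decompose $P^F(Y^{a=1}=1)$ via the law of total probability on $Y^g$ to write
\begin{equation*}
P^F(Y^{a=1}=1) = P^F(Y^{a=1}=1, Y^g=1) + P^F(Y^{a=1}=1, Y^g=0).
\end{equation*}
Since $P^F(Y^g=1) \geq P^F(Y^{a=1}=1, Y^g=1)$, subtracting yields $P^F(Y^{a=1}=1) - P^F(Y^g=1) \leq P^F(Y^{a=1}=1, Y^g=0)$. Thus it suffices to show $P^F(Y^{a=1}=1, Y^g=0) \leq P^F(S=1)$.

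The key step uses recursive substitution: since $Y^g = Y^{A^{g+}}$, the event $\{Y^{a=1}=1, Y^g=0\}$ decomposes along the value of $A^{g+}$. When $A^{g+}=1$, $Y^g = Y^{a=1}$, so the event $\{Y^{a=1}=1, Y^g=0\} \cap \{A^{g+}=1\}$ is empty. When $A^{g+}=0$, $Y^g = Y^{a=0}$, so $\{Y^{a=1}=1, Y^g=0\} \cap \{A^{g+}=0\} = \{Y^{a=1}=1, Y^{a=0}=0\} \cap \{A^{g+}=0\} \subseteq \{S=1\}$. Taking probabilities delivers the required inequality.

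I do not anticipate a serious obstacle here: the argument is essentially a two-line bookkeeping after the total-probability decomposition, and the only subtle point is recognizing that $Y^g$ is the composition $Y^{A^{g+}}$, which makes the case split on the assigned treatment under $g$ trivialize the right-hand case. The argument does not require any further assumptions on $P^F$ (no exchangeability, no consistency), which is consistent with the proposition being stated universally over laws $P^F$.
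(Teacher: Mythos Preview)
Your proof is correct and hinges on the same key step as the paper's: the recursive-substitution identity $Y^g = Y^{A^{g+}}$, followed by a case split on $A^{g+}$. The organization differs, however. The paper decomposes $P^F(Y^g=1)$ directly along $A^{g+}$ and rewrites each piece in principal-strata terms, arriving at the exact identity
\[
\tau_{P^F}^g \;=\; P^F(S=1) \;-\; \bigl\{P^F(S=1,\,A^{g+}=1) + P^F(S=2,\,A^{g+}=0)\bigr\},
\]
from which nonnegativity of the bracketed term gives the bound. You instead decompose $P^F(Y^{a=1}=1)$ along $Y^g$ and chain two inequalities; this proves the proposition but does not exhibit the explicit slack term. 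The paper's identity is reused verbatim as the first line of the proof of Proposition~\ref{prop:s4}, so its form has a small downstream payoff that your argument would not directly supply. One terminological quibble: your closing remark that the argument uses ``no consistency'' is not quite right---the step $Y^g = Y^{A^{g+}}$ \emph{is} the recursive-substitution (consistency) assumption declared in the appendix preliminaries, and the paper explicitly names it at the analogous point.
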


\begin{proof}

    \begin{align*}
        P^F(Y^g=1) = & P^F(Y^g=1, A^{g+}=1) + P(Y^g=1, A^{g+}=0) \\
                 = & P^F(Y^{a=1}=1, A^{g+}=1) + P^F(Y^{a=0}, A^{g+}=0) \\
                 = & \Big\{P^F(S=1, A^{g+}=1) +  P^F(S=2, A^{g+}=0)\Big\} + P^F(S=3). 
    \end{align*}

The first equality follows by laws of probability, the second by recursive substitution (consistency), and and third follows by laws of probability. Let $\epsilon$ denote the term $\{P^F(S=1, A^{g+}=1) +  P^F(S=2, A^{g+}=0)\}$. Then 

    \begin{align*}
        \tau_{P^F}^g = & P^F(S=1)  - \{P^F(S=1, A^{g+}=1) +  P^F(S=2, A^{g+}=0)\}\\
               = & P^F(S=1)  - \epsilon.
    \end{align*}

The first equality follows by definition (and cancellation of $P^F(S=3)$), and the final by definition. Because $\epsilon$ must be positive for all laws $P^F$, then $\tau_{P^F}^g$ is a valid lower bound for $P^F(S=1)$. 
\end{proof}

\begin{proposition}\label{prop:s4}
    
When strong exchangeability holds under $P^F$ relative to regime $g$, $$\{Y^{a=1}, Y^{a=0}\} \independent_{P^F} A^{g+}$$ then $\tau_{P^F}^0 \geq \tau_{P^F}^g$. 

\end{proposition}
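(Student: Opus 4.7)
The plan is to express $\mathbb{E}_{P^F}[Y^g]$ as a convex combination of $\mathbb{E}_{P^F}[Y^{a=1}]$ and $\mathbb{E}_{P^F}[Y^{a=0}]$ by pairing consistency with the hypothesized strong exchangeability, and then to read the inequality directly off the resulting algebraic identity. First I would partition on $A^{g+}$ via the law of total expectation, writing
\begin{align*}
\mathbb{E}_{P^F}[Y^g] = \mathbb{E}_{P^F}[Y^g \mid A^{g+}=1]\,P^F(A^{g+}=1) + \mathbb{E}_{P^F}[Y^g \mid A^{g+}=0]\,P^F(A^{g+}=0).
\end{align*}
Recursive substitution (as used in the proof of Proposition \ref{prop:s3}) replaces $Y^g$ by $Y^{a}$ on the event $\{A^{g+}=a\}$, and the hypothesized strong exchangeability $\{Y^{a=1},Y^{a=0}\}\independent_{P^F}A^{g+}$ then drops the conditioning on $A^{g+}$. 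Writing $p := P^F(A^{g+}=1)$, these two steps together yield
\begin{align*}
\mathbb{E}_{P^F}[Y^g] = p\,\mathbb{E}_{P^F}[Y^{a=1}] + (1-p)\,\mathbb{E}_{P^F}[Y^{a=0}].
\end{align*}

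From this identity the conclusion is immediate. By definition, $\tau_{P^F}^{0} - \tau_{P^F}^{g} = \mathbb{E}_{P^F}[Y^g] - \mathbb{E}_{P^F}[Y^{a=0}]$, and the convex combination above evaluates the right-hand side as
\begin{align*}
\tau_{P^F}^{0} - \tau_{P^F}^{g} = p\,\mathbb{E}_{P^F}[Y^{a=1}] + (1-p)\,\mathbb{E}_{P^F}[Y^{a=0}] - \mathbb{E}_{P^F}[Y^{a=0}] = p\,\tau_{P^F}^{0},
\end{align*}
so that $\tau_{P^F}^{0} \geq \tau_{P^F}^{g}$ follows whenever $\tau_{P^F}^{0} \geq 0$ — precisely the regime in which the lower bound from Proposition \ref{prop:s3} is nontrivially informative about $P^F(S=1)$ (when $\tau_{P^F}^{0} < 0$, both candidates are dominated by the trivial bound $P^F(S=1) \geq 0$). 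There is no serious obstacle; the only step that requires care is the simultaneous invocation of consistency and exchangeability to collapse the conditional expectations into the convex combination, after which the remainder is a single line of algebra together with $p \in [0,1]$.
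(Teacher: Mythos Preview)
Your argument is correct and arrives at precisely the identity the paper derives, namely $\tau_{P^F}^g = P^F(A^{g+}=0)\,\tau_{P^F}^0$ (equivalently, your $\tau_{P^F}^0 - \tau_{P^F}^g = p\,\tau_{P^F}^0$). The only cosmetic difference is bookkeeping: the paper reaches this identity by applying strong exchangeability to the principal-strata decomposition established in Proposition~\ref{prop:s3}, whereas you work directly with expectations via the law of total expectation and consistency; these are the same computation in different notation. Your caveat that the inequality $\tau_{P^F}^0 \geq \tau_{P^F}^g$ requires $\tau_{P^F}^0 \geq 0$ is well taken --- the paper's concluding sentence (``bounded to the unit interval \ldots\ follows immediately'') leaves this implicit --- and your contextual remark that this is exactly the regime in which the lower bound is nontrivial is the right way to reconcile it with the paper's intended use.
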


\begin{proof}

    \begin{align*}
    \tau_{P^F}^g =& P^F(S=1) - \{P^F(S=1, A^{g+}=1) +  P^F(S=2, A^{g+}=0)\} \\
           =& P^F(S=1) - \{P^F(S=1)P^F(A^{g+}=1) +  P^F(S=2)P^F(A^{g+}=0)\} \\
           =& P^F(A^{g+}=0)\{P^F(S=1) -  P^F(S=2)\} \\
           =& \tau_{P^F}^0P^F(A^{g+}=0).
    \end{align*}

The first equality follows by Proposition \ref{prop:s3}, the second by strong exchangeability, and the final two by laws of probability. Because $P^F(A^{g+}=0)$ is bounded to the unit interval for all laws $P^F$, then the claim of Proposition \ref{prop:s4} follows immediately. 
\end{proof}

Arguments are identical when all terms additionally condition on the event $L=l$.

\subsection{Proposition \ref{prop:s5}}

\citet{tian/pearl:probcaus} showed that the lower bound for $P^F(S=1)$ is improved by the addition of non-experimental data, that is, $$\mathcal{L}^*_{P^F} > \mathcal{L}_{P^F}$$ whenever:

\begin{align}
    \text{max}
        \begin{Bmatrix}
        & \tau_{P^F}, \\
        & \beta_{P^F} 
        \end{Bmatrix} >
    \text{max}\begin{Bmatrix}
        & 0, \\
        & \tau_{P^F}^0
        \end{Bmatrix}. \label{eq: Lowerimprov}
\end{align}

%

Let $\tau_{P^F, A=1}$ denote the average treatment effect among the treated, 

$$\tau_{P^F, A=1}\coloneqq \mathbb{E}_{P^F}(Y^{a=1} \mid A=1) - \mathbb{E}_{P^F}(Y^{a=0} \mid A=1),$$ and let $\tau_{P^F, A=0}$ denote the average treatment effect among the untreated, $A=0$.

\begin{proposition}\label{prop:s5}

Condition \eqref{eq: Lowerimprov} holds if and only if the following condition holds:
\begin{align}
    \text{sign} \Big\{\tau_{P^F, A=1} \Big\} \not\equiv 
    \text{sign} \Big\{\tau_{P^F, A=0} \Big\}. \label{eq: supercate_ineq} 
\end{align}

\end{proposition}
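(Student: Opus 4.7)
My plan is to reduce condition \eqref{eq: Lowerimprov} to an algebraic comparison involving only $\tau_{P^F}$ and $\beta_{P^F}$, and then translate the outcome back into the language of the ATT and ATU.

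First, I would decompose $\mathbb{E}_{P^F}[Y]$, $\mathbb{E}_{P^F}[Y^{a=1}]$, and $\mathbb{E}_{P^F}[Y^{a=0}]$ by conditioning on $A$ and invoking consistency ($Y = Y^A$ almost surely). Substituting these decompositions into the definitions of $\tau_{P^F}$, $\beta_{P^F}$, and $\tau_{P^F}^0$ yields the identities
\begin{align*}
\tau_{P^F} &= P^F(A=0)\,\tau_{P^F, A=0}, \\
\beta_{P^F} &= P^F(A=1)\,\tau_{P^F, A=1}, \\
\tau_{P^F}^0 &= \tau_{P^F} + \beta_{P^F}.
\end{align*}
Under positivity ($0 < P^F(A=1) < 1$), the first two identities show that the signs of $\tau_{P^F}$ and $\beta_{P^F}$ coincide with those of $\tau_{P^F, A=0}$ and $\tau_{P^F, A=1}$ respectively.

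Second, with $u := \tau_{P^F}$ and $v := \beta_{P^F}$, condition \eqref{eq: Lowerimprov} reduces to $\max\{u, v\} > \max\{0, u + v\}$. By symmetry, assume $u \geq v$. The inequality then requires both $u > 0$ and $u > u+v$, i.e.\ $v < 0$; the symmetric case $v > u$ requires $v > 0$ and $u < 0$. Hence \eqref{eq: Lowerimprov} holds iff $u$ and $v$ have strictly opposite (non-zero) signs, which via the identities of step one is equivalent to $\tau_{P^F, A=0}$ and $\tau_{P^F, A=1}$ having strictly opposite signs, i.e.\ condition \eqref{eq: supercate_ineq}.

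The main subtlety is in the boundary cases: if either $\tau_{P^F, A=0}$ or $\tau_{P^F, A=1}$ equals zero, then one of $u, v$ vanishes, and $\max\{u,v\} = \max\{0, u+v\}$, so the strict inequality in \eqref{eq: Lowerimprov} fails. Accordingly, the $\not\equiv$ relation in \eqref{eq: supercate_ineq} must be read as genuine sign opposition (one strictly positive, the other strictly negative), so that the zero-effect edge cases are excluded on both sides of the equivalence. Everything else is routine algebra, and the proof is short once the three identities in the display above are in hand.
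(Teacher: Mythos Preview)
Your proposal is correct and arguably cleaner than the paper's own argument, though the route is genuinely different. The paper works entirely in principal-strata coordinates: it expands $\tau_{P^F}$, $\tau_{P^F}^0$, and $\beta_{P^F}$ in terms of joint probabilities like $P^F(S=1,A=1)$, and then shows case-by-case that $\tau_{P^F}>\tau_{P^F}^0 \iff \tau_{P^F,A=1}<0$ and $\tau_{P^F}>0 \iff \tau_{P^F,A=0}>0$, with an ``isomorphic'' argument for $\beta_{P^F}$. You instead derive the three identities $\tau_{P^F}=P^F(A=0)\,\tau_{P^F,A=0}$, $\beta_{P^F}=P^F(A=1)\,\tau_{P^F,A=1}$, and $\tau_{P^F}^0=\tau_{P^F}+\beta_{P^F}$ directly from consistency and the law of total expectation, reducing the whole proposition to the elementary fact that $\max\{u,v\}>\max\{0,u+v\}$ holds precisely when $u$ and $v$ have strictly opposite signs. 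Your approach requires an explicit (but standard) positivity assumption $0<P^F(A=1)<1$ to transfer signs, which the paper's strata argument leaves implicit; conversely, the paper's route ties the result more visibly to the $P(S=1)$ bounds that motivate the proposition. Your treatment of the zero-sign boundary is also more explicit than the paper's.
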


\begin{proof}
Note that \eqref{eq: supercate_ineq} holds if either $\tau_{P^F} > \{ 0, \tau_{P^F}^0\}$ or $\beta_{P^F} > \{0, \tau_{P^F}^0\}$. Consider the following implications:
 
\begin{align*}
    & \tau_{P^F} > \tau_{P^F}^0 \\
    \iff & P^F(S=1)  - \{P^F(S=1, A=1) +  P^F(S=2, A=0)\} > P^F(S=1)  - P^F(S=2) \\ 
    \iff &  P^F(S=2) > P^F(S=1, A=1) +  P^F(S=2, A=0) \\ 
    \iff &  P^F(S=2, A=1) > P^F(S=1, A=1)  \\ 
    \iff &  P^F(S=2 \mid A=1) > P^F(S=1 \mid A=1) \\
    \iff &  \tau_{P^F, A=1} < 0.
\end{align*}
    
Suppose also that $\tau_{P^F}>0$. 

\begin{align*}
    & \tau_{P^F}>0 \\
    \iff & P^F(S=1)  > P^F(S=1, A=1) +  P^F(S=2, A=0) \\ 
    \iff & P^F(S=1, A=0)  >  P^F(S=2, A=0) \\ 
    \iff & P^F(S=1 \mid A=0)  >  P^F(S=2 \mid A=0) \\
    \iff &  \tau_{P^F, A=0} > 0.
\end{align*}

Thus, $\tau_{P^F} > \{ 0, \tau_{P^F}^0\}$ if and only if both $\tau_{P^F, A=1} < 0$ and $\tau_{P^F, A=0} > 0$. By isomorphic arguments, we can show that $\beta_{P^F} > \{ 0, \tau_{P^F}^0\}$ if and only if both $\tau_{P^F, A=1} > 0$ and $\tau_{P^F, A=0} < 0$. Then the claim of proposition \ref{prop:s5} follows immediately.
    
\end{proof}

Arguments are identical when all terms additionally condition on the event $L=l$. Thus, it follows that, when $A=A^*$ with probability 1 (as under $\mathcal{M}^*$) and  \eqref{eq: Lowerimprov} holds, the expected outcome for the optimal regime that is a function of $L$ will be inferior to that for the optimal regime that is additionally a function of $A^*$. Thus, the argument in the main text follows: under $\mathcal{M}^*$,  non-experimental data will improve the lower bound on the probability of \textit{counterfactual} harm if and only if an \textit{interventionist} would obtain a greater expected utility through the use of such data.

\section{Review of principal strata bounds with experimental and non-experimental data}

\citet{tian/pearl:probcaus} note that, for an arbitrary conditioning event $X=x$,  constraints of the type $$P^F_1(S=1, A=1 \mid X=x) +  P^F_1(S=3, A=1 \mid X=x) = P^F_1(Y=1, A=1 \mid X=x)$$ will impose constraints on the principal strata probabilities, for example, that  $P^F_1(S=1 \mid X=x)$ must be greater than $P_1(Y=1 \mid X=x) - P_1^F(Y^{a=0}=1 \mid X=x)$. Leveraging experimental data under model $\mathcal{M}$, we likewise have that $P^F_1(S=1 \mid L=l) = P^F_1(S=1 \mid L=l, R=1)$ must be greater than $P_1(Y=1 \mid L=l, R=1) - P_1^F(Y^{a=0}=1 \mid L=l)$. We also have 
$P^F_1(S=1 \mid L=l) = P^F_1(S=1 \mid L=l, R=0)$ must be greater than $P_1(Y=1 \mid L=l, R=0) - P_1^F(Y^{a=0}=1 \mid L=l)$. The results of \citet{tian/pearl:probcaus} motivate the following lower bound on $P^F_1(S=s \mid L=l)$:

\begin{align}
    P^F_1(S=1 \mid L=l) \geq 
        \text{max} \begin{Bmatrix}
            & 0 \\
          &  P_1^F(Y^{a=1}=1 \mid L=l) - P_1^F(Y^{a=0}=1 \mid L=l),  \\
          &  P_1(Y=1 \mid L=l, R=1) - P_1^F(Y^{a=0}=1 \mid L=l),  \\  
          &  P_1^F(Y^{a=1}=1 \mid L=l) - P_1(Y=1 \mid L=l, R=1),  \\
          &  P_1(Y=1 \mid L=l, R=0) - P_1^F(Y^{a=0}=1 \mid L=l),  \\  
          &  P_1^F(Y^{a=1}=1 \mid L=l) - P_1(Y=1 \mid L=l, R=0).  \\
         \end{Bmatrix}
\end{align}

Using identification results from the experimental data, we express these constraints fully in terms of observed data:

\begin{align}
    P^F_1(S=1 \mid L=l) \geq 
        \text{max} \begin{Bmatrix}
          & 0, \\
          &  P_0(Y=1 \mid A=1, L=l, R=1) - P_0(Y=1 \mid  A=0, L=l, R=1),  \\
          &  P_0(Y=1 \mid  L=l, R=1) - P_0(Y=1 \mid  A=0, L=l, R=1),  \\  
          &  P_0(Y=1 \mid A=1, L=l, R=1) - P_0(Y=1 \mid  L=l, R=1), \\
          &  P_0(Y=1 \mid  L=l, R=0) - P_0(Y=1 \mid  A=0, L=l, R=1),  \\  
          &  P_0(Y=1 \mid A=1, L=l, R=1) - P_0(Y=1 \mid  L=l, R=0). 
         \end{Bmatrix}
\end{align}

However, because of the constraint that $P_0(Y=1 \mid  L=l, R=1)$ is a convex combination of $P_0(Y=1 \mid  A=0, L=l, R=1)$ and $P_0(Y=1 \mid A=1, L=l, R=1)$, then the third and fourth terms in the set are guaranteed to be less or equal to than the second term, and so they do not effectively sharpen the bounds. However, when   $Y^a \not\independent_{P_0^F} A \mid L, R=0$ for some $a$, as would be expected in non-experimental data, then no such constraint on $P_0(Y=1 \mid  L=l, R=0)$ is imposed and these additional constraints (the fifth and sixth terms) may indeed sharpen the bounds. Of course, such constraints may only be leveraged when investigators additionally take advantage of (confounded) non-experimental data. We also emphasize that these bounds will only hold under models like $\mathcal{M}$, in which the CATEs are preserved between both experimental and non-experimental data, but that the conditional joint distribution of treatment $A$ and potential outcomes differ.

\section{A note on potential outcomes notation}

Although we use a notation that often corresponds to assumptions about the existence of counterfactual variables,  we strongly emphasize that such assertions are unnecessary for the \textit{interventionist} approach. We use this notation in the first place because the \textit{counterfactual} approach depends inextricably on the existence of counterfactuals. We use them throughout to reduce notational burden, but we encourage readers to interpret any parameter defined by a \textit{single} intervention as simply the analogous parameter that would be observed in the arm of a hypothetical controlled experiment corresponding to that intervention. For example, the expected potential outcome under treatment $a$, $Y^a$, among patients with covariates $L=l$ should be interpreted as the expected outcome among patients with $L=l$ in a hypothetical controlled experiment where $A$ is fixed to $a$. As a second example, claims of (conditional) independencies between single-intervention counterfactuals and treatments, for example $Y^{a=1} \independent A$, should simply be interpreted as claims that patients with different treatments in the observed data are (conditionally) exchangeable with respect to the outcome $Y$. 
\end{document}